\definecolor{darkgreen}{rgb}{0.0,0,0.9}
\newtcolorbox{wbox}
{
	colback  = white,
}
\newcommand*{\suppress}[1]{}
\newcommand*{\cM}{\mathcal{M}}
\newcommand*{\cR}{\mathcal{R}}
\def\thm@space@setup{%
	\thm@preskip= 10pt
	\thm@postskip=\thm@preskip % or whatever, if you don't want them to be equal
}
\renewcommand{\paragraph}{%
	\@startsection{paragraph}{4}%
	{\z@}{5pt}{-1em}%
	{\normalfont\normalsize\bfseries}%
}
\newtheorem{theorem}{Theorem}
\newtheorem{lemma}{Lemma}
\newtheorem{corollary}{Corollary}
\newtheorem{problem}{Problem}
\newtheorem{example}{Example}
\newtheorem{proposition}{Proposition}
\theoremstyle{definition}
\newenvironment{fminipage}%
{\begin{Sbox}\begin{minipage}}%
		{\end{minipage}\end{Sbox}\fbox{\TheSbox}}
\newcommand*{\PSch}{\mbox{\rm{Preferred-Schools}}}
\newcommand*{\PSt}{\mbox{\rm{Preferring-Students}}}
\newcommand*{\BStP}{\mbox{\rm{BS-Preferring}}}
\title{Further Results on Stability-Preserving \\
Mechanisms for School Choice\\
(Preliminary Version)}
\author{Karthik Gajulapalli, James A. Liu, Vijay V.~Vazirani}
\affil{Department of Computer Science, University of California, Irvine}
\date{}
\begin{document}
	\maketitle
	
	\begin{abstract}
We build on the stability-preserving school choice model introduced and studied recently in \cite{MV.school}. We settle several of their open problems and we define and solve a couple of new ones.

%Finally, we also give a procedure that outputs all possible stability-preserving extensions of a given stable matching (which may be exponentially many) with polynomial delay.
\end{abstract}

	\section{Preliminaries}

\subsection{The stable matching problem for school choice}
\label{sec:stable}

The stable matching problem for schools is defined as follows: We are given as input a set of $m$ schools $H = \{h_1, h_2, \ldots , h_m\}$, and a set of $n$ students $S =\{s_1, s_2, \ldots , s_n\}$ who are seeking admissions to the schools. Each school $h_j$ has some fixed  capacity $c(j)$ of students who can be enrolled in the school. If the school is assigned $c(j)$ students then we say that $h_j$ is filled to capacity. If $h_j$ is assigned less than $c(j)$ students then we say $h_j$ is under-filled.  
%The stable matching problem takes as input a set $H = \{h_1, h_2, \ldots , h_m\}$ of $m$ public schools and a set $S =\{s_1, s_2, \ldots , s_n\}$ of $n$ students who are seeking admission to the schools. Each school $h_j \in H$ has an integer-valued {\em capacity}, $c(j)$, stating the maximum number of students that can be assigned to it. If $h_j$ is assigned $c(j)$ students, we will say that $h_j$ is {\em filled}, and otherwise it is {\em under-filled}. \\\\
%
\\
Every student $s_i$ has a strict and totally ordered preference list $l(s_i)$ over the schools $H \cup \{\emptyset\}$. $l(s_i)$ is ordered in decreasing order of preference, so if some student prefers a school $h_j$ to $h_k$ then $h_j$ appears before $h_k$ on the students preference list. We allow the inclusion of $\emptyset$ to allow partial preference lists. If a student prefers being unmatched than being matched to a school they dislike, then such a preference is modeled by letting $\emptyset$ precede the set of unwanted schools on the preference list of the student. We let $\emptyset$ have infinite capacity. Likewise each school $h_j$ has a strict and totally ordered preference list $l(h_j)$ over the students $S \cup \{\emptyset\}$.\\

A matching $M$ in this context is an assignment of students to schools that satisfy the capacity constraints of the schools, with the added condition that no student can be matched to more than one school. A matching is said to have a blocking pair $(s_i,h_k)$ if there exists a student-school pair $(s_i, h_j)$ and a school $h_k$ such that $s_i$ prefers $h_k$ to $h_j$ and either:
\begin{enumerate}

    \item there is some student $s_j$ currently matched to $h_k$ such that $h_k$ prefers $s_i$ to $h_j$ or
    \item $h_k$ is under-filled and $h_k$ prefers $s_i$ to $\emptyset$
\end{enumerate}.

In both the cases $s_i$ would like to break her match with $h_j$ and instead go to school $h_k$. We define a matching $M$ to be stable if it does not contain any blocking pairs.\\

\section{Problem Definition}
\label{sec:prob}

We study the assignment of students to schools in two rounds,  $\cR_1$ and $\cR_2$, which are temporally separated. In this section we state the two settings studied; for each, we will have two mechanisms, $\cM_1$ and $\cM_2$. In round $\cR_1$, mechanism $\cM_1$ finds a stable matching of students to schools, $M$. In round $\cR_2$ a change is made to the sets of participants, which may cause $M$ to no longer be a valid or stable matching. $\cM_2$ then updates $M$ to $M'$ in order to ensure a stable matching. By allowing {\em updates}, we let some students in $M$ get unmatched in $M'$, or get matched to different schools. This differs from the settings discussed in \cite{MV.school} where  $\cM_2$ is not allowed to break a match created by $\cM_1$.\\\\
The students report their preference lists and the mechanisms operate on whatever is reported. We will assume that the schools' preference lists are truthfully reported. We will show that in Setting 1 there does not exist a mechanism that is dominant strategy incentive compatible (DSIC), and in Setting 2 our mechanism is DSIC for students, showing that students cannot do better by misreporting their preference lists.\\\\
We now state the common aspects of the first two settings before describing them completely. In both, in round $\cR_1$, the setup defined in Section \ref{sec:stable} prevails and $\cM_1$ simply computes any stable matching respecting the capacity of each school, namely $c(j)$ for $h_j$. Let this matching be denoted by $M$.
%$S_M \subseteq S$ be the set of students assigned to schools by $M$ and $L = S -S_M$ be the set of {\em left-over students}.\\\\
%
In round $\cR_2$, the City is allowed to re-allocate some students and update matching $M$ to a new matching $M'$ that is stable with respect to the round $\cR_2$ participants. However, we want to minimize the number of students in $S$ who are re-allocated. We refer to any stable matching $M'$ that optimizes this objective as a minimum stable re-allocation.

%%%%%%%%%%%%%%%%%%%%%%%

\subsection{Setting I (Adding New Schools)}
\label{sec:S1}

In this setting, the City has some new schools $H'$ that have opened up in $\cR_2$. The preference lists of students are updated to include schools in $H'$, though their relative preferences between schools in $H \cup \{\emptyset\}$ are unchanged.
In particular the addition of new schools might result in some students wanting to leave their current schools to go to a new school. This could lead to vacant seats being created in the original schools, causing some other students to leave their current schools and move to schools they prefer more that now have vacant seats. The City wants to find a stable matching over students $S$ and schools $H\cup H'$ that minimizes the number of students who are re-allocated from their school in $M$.
%In particular, students may want to switch to the new schools, or to schools with vacancies due to others students switching. The City wants to find a stable matching over students $S$ and schools $H\cup H'$ that minimizes the number of students who are re-allocated from their school in $M$.

\begin{theorem}
\label{thm:S1}
	There is a polynomial time mechanism $\cM_2$ that finds the minimum stable reallocation with respect to Round $\cR_1$ matching $M$, students $S$, and schools $H\cup H'$.
\end{theorem}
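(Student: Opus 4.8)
The plan is to show that $\cM_2$ can simply output $M^*$, the \emph{student-pessimal} (equivalently school-optimal) stable matching of the round-$\cR_2$ instance $(S, H\cup H')$, which is computable in polynomial time by running school-proposing deferred acceptance. The point will be that $M^*$ re-allocates no more students than any other stable matching of the new instance. A useful orienting observation (not logically needed, but it explains the structure) is that $M$ is ``almost stable'' for $(S,H\cup H')$: since students' preferences restricted to $H\cup\{\emptyset\}$ and all capacities are unchanged, every blocking pair of $M$ in the new instance involves a new school of $H'$, so only ``upward'' movement can be triggered.

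The heart of the proof is the following lemma: \emph{for every stable matching $M'$ of $(S,H\cup H')$ and every student $s$, $M'(s)\succeq_s M(s)$} — nobody is ever strictly worse off than in $M$. I would prove this by contradiction. Let $T=\{s: M(s)\succ_s M'(s)\}$ and suppose $T\neq\emptyset$; each $s\in T$ is matched in $M$, so $h:=M(s)$ is a real school. Step one: $(s,h)$ must not block $M'$, which (using individual rationality of $M$) forces $h$ to be \emph{full} in $M'$ and to prefer every student of $M'^{-1}(h)$ to $s$. Step two, the delicate part: any ``newcomer'' $a\in M'^{-1}(h)\setminus M^{-1}(h)$ lies in $T$ — for if $a$ did not prefer $M(a)$ to $h$, then, using that $h$ prefers $a$ to $s\in M^{-1}(h)$ (step one), that $h\in H$ has unchanged preferences, and that $M$ is stable, the pair $(a,h)$ would have blocked the \emph{old} stable matching $M$ (one checks the case $a$ matched in $M$ and the case $a$ unmatched in $M$ separately). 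Counting seats at each $h\in\{M(s):s\in T\}$ then gives $|M'^{-1}(h)\setminus M^{-1}(h)|\ge t_h$, where $t_h$ is the number of $T$-students $h$ loses; since these newcomers all lie in $T$ and are disjoint across schools, summing over all such $h$ forces equality, so the $T$-students merely permute among a fixed family of (old, full) school seats and, in particular, $M'(s)=M(u)$ for some $u\in T$, for every $s\in T$. To conclude: fix any $s\in T$, set $h^-:=M'(s)$, and pick $\hat s\in T$ with $M(\hat s)=h^-$; then $\hat s$ strictly prefers $h^-$ to $M'(\hat s)$, while $h^-$ is full in $M'$ and (step one applied to $\hat s$) prefers $s\in M'^{-1}(h^-)$ to $\hat s$, so $(\hat s,h^-)$ blocks $M'$ — contradiction. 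Hence $T=\emptyset$.

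Granting the lemma, the theorem follows quickly. Because preferences are strict, any student $s$ with $M^*(s)\neq M(s)$ satisfies $M^*(s)\succ_s M(s)$. Now let $M'$ be an arbitrary stable matching of $(S,H\cup H')$ — i.e. any feasible stable re-allocation. Recall that in the many-to-one school-choice model the stable matchings form a lattice with a student-pessimal element, which is exactly $M^*$ (output of school-proposing deferred acceptance), so every student weakly prefers $M'$ to $M^*$. Thus whenever $M^*(s)\neq M(s)$ we get $M'(s)\succeq_s M^*(s)\succ_s M(s)$, hence $M'(s)\neq M(s)$. Therefore the set of students re-allocated by $M'$ contains the set re-allocated by $M^*$, so $M^*$ is a minimum stable re-allocation; and computing it via school-proposing deferred acceptance is polynomial-time, giving the desired $\cM_2$.

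The main obstacle is the delicate step in the lemma: showing that the newcomers to the schools vacated by $T$-students are themselves in $T$, followed by the seat-counting bookkeeping that pins down the permutation structure. This is the one place where one must simultaneously use stability of $M$, stability of $M'$, individual rationality, and the hypothesis that preferences over the old schools (and $\emptyset$) are unchanged, and it also requires a careful treatment of students left unmatched by $M$ or by $M'$. Everything else — the reduction to ``$M^*$ is the student-pessimal stable matching'' and its polynomial-time computability — is standard.
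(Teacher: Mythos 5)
Your key lemma --- that \emph{every} stable matching $M'$ of the new instance $(S,H\cup H')$ satisfies $M'(s)\succeq_s M(s)$ for all $s$ --- is false, and with it the whole approach collapses. Counterexample: students $A,B$, schools $1,2$ each with one seat in round $\cR_1$; $A$ prefers $(1,2)$, $B$ prefers $(2,1)$, school $1$ prefers $(B,A)$, school $2$ prefers $(A,B)$. Let $\cM_1$ output the (student-optimal) stable matching $M=\{(A,1),(B,2)\}$, which the paper explicitly allows since $\cM_1$ may return any stable matching, and let $H'$ consist of one new school that both students rank last. Then $M'=\{(A,2),(B,1)\}$ is still a stable matching of the new instance, and \emph{both} students are strictly worse off in it than in $M$, so the set $T$ in your argument is nonempty. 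The identifiable error is your concluding step: from ``step one applied to $\hat s$'' you know $h^-$ prefers its current occupant $s\in M'^{-1}(h^-)$ \emph{to} $\hat s$, which is exactly the condition that $(\hat s,h^-)$ does \emph{not} block $M'$; a blocking pair would require $h^-$ to prefer $\hat s$ over some occupant. So no contradiction is reached, as the counterexample confirms. Moreover the failure is not only in the proof but in the mechanism itself: in the example above, school-proposing deferred acceptance on $(S,H\cup H')$ outputs $\{(A,2),(B,1)\}$, which reallocates two students although zero reallocations are needed (keeping $M$ and leaving the new school empty is stable). More structurally, your output ignores $M$ entirely, while the objective ``minimum number of reallocations with respect to $M$'' manifestly depends on $M$; your plan could only work if $M$ were guaranteed to be the school-optimal stable matching of the round-$\cR_1$ instance, which the paper does not assume (and would not want, since the student-optimal $M$ is the natural choice).

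The paper's route is different in exactly the place where yours breaks. It proves the weak-improvement property only for \emph{minimum} stable reallocations: if some students were worse off in a minimum stable reallocation $M'$, the pointwise student-best combination $M\wedge M'$ is shown to be feasible and stable with strictly fewer reallocations, a contradiction. It then runs a greedy procedure starting from $M$ --- while some school $h_j$ has unmet capacity and $\BStP(h_j)\neq\emptyset$, assign $\BStP(h_j)$ to $h_j$ --- and argues by induction that every move it makes is forced in every minimum stable reallocation, so it moves only students who must move. The resulting matching is the school-optimal one \emph{among minimum stable reallocations}, which is in general not the school-optimal stable matching of the whole new instance that you propose to output. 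If you want to salvage your lattice-style viewpoint, you would need to restrict the comparison class to stable matchings that weakly improve on $M$ for every student (equivalently, work inside the sublattice above $M$), which is essentially what the paper's $M\wedge M'$ argument and greedy extension accomplish.
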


\subsection{Setting II (Adding New Students)}
\label{sec:S2}

In this setting, a set $N$ of {\em new students} arrive from other cities in round $\cR_2$. The preference lists of schools are also updated to include students in $N$, though their relative preferences between students in $S\cup \{\emptyset\}$ are unchanged. The City wants to find a stable matching over students $S \cup N$ and schools $H$ that minimizes the number of students who are re-allocated from their school in $M$.

\begin{theorem}
\label{thm:S2}

There is a polynomial time mechanism $\cM_2$ that is DSIC for students and finds the minimum stable reallocation with respect to Round $\cR_1$ matching $M$, students $S$, and schools $H\cup H'$.
	%There is a polynomial time mechanism $\cM_2$ that is DSIC for students and updates a matching $M$ to $M'$ so that $M'$ is stable w.r.t. students $S \cup N$ and schools $H$. Furthermore, $\cM_2$ yields a matching that applies the minimum number of re-allocations required from $M$ for any mechanism adding students in a stable manner.
\end{theorem}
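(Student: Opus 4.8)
The plan is to implement $\cM_2$ as a single monotone ``vacancy-chain'' run of the student-proposing deferred-acceptance (DA) algorithm and then to read off the minimum-reallocation and DSIC guarantees from the structure of that run. Concretely, I would work with the \emph{truncated instance} $\mathcal{I}^{\dagger}$ on students $S$ and schools $H\cup H'$ in which every student $s$ keeps her round-$\cR_2$ list down to and including $M(s)$ but deletes every school she ranks strictly below $M(s)$; capacities are inherited from $\cR_1$ for $H$ and are the declared capacities for $H'$. I would then run student-proposing DA on $\mathcal{I}^{\dagger}$ \emph{initialized at} $M$, i.e.\ every school starts holding its $M$-cohort and only students who strictly prefer some new or newly-vacated school to their current seat propose upward. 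Because students only ever trade up and schools only ever trade up, this process is monotone and terminates; let $M'$ denote its output, which is the proposed value of $\cM_2$.

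Next I would prove stability of $M'$ with respect to the \emph{true} (untruncated) preferences over $(S,H\cup H')$. The monotonicity of the run gives the key invariant that every student ends weakly above her fallback, $M'(s)\succeq_s M(s)$, and that $M'$ is stable within $\mathcal{I}^{\dagger}$. Any blocking pair $(s,h)$ against the true preferences must then satisfy $h\succ_s M'(s)\succeq_s M(s)$, so $h$ survives truncation and $(s,h)$ would already block in $\mathcal{I}^{\dagger}$ --- a contradiction. Hence no school strictly below $M(s)$ can participate in a block, and $M'$ is stable for $(S,H\cup H')$; polynomial running time is immediate since it is one DA execution.

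For the optimization and incentive claims I would argue two things. \emph{Minimality:} by the invariant above, a student is reallocated if and only if she strictly improves over $M(s)$, so the reallocated set is exactly the set of students who climb the ladder opened by $H'$; I would show each such student sits at the head of a chain of blocking pairs that every stable matching of $(S,H\cup H')$ is forced to resolve, so that no stable matching reallocates a strict subset. \emph{DSIC:} I would invoke strategy-proofness of student-proposing DA, using that $M'$ gives each student her most preferred outcome among the stable matchings that honour the fallback $M(s)$, so that no round-$\cR_2$ misreport can help a truthful student.

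The hard part --- and the step I expect to carry the whole argument --- is reconciling \emph{Minimality} with \emph{DSIC}, because the two objectives pull in opposite directions. Strategy-proofness is the hallmark of the student-\emph{optimal} stable matching, whereas minimizing the number of moves rewards leaving students exactly at $M(s)$, which is the behaviour of the student-\emph{pessimal} matching; on small instances (one pre-existing school, one new school, two students) the student-optimal stable matching can reallocate strictly more students than the pessimal one, so these selections genuinely differ. The decisive lemma I would therefore have to establish is a structural invariance: in the restricted regime of \emph{adding schools} to a stable $M$, the \emph{set} of reallocated students is identical across all stable matchings of $(S,H\cup H')$, which would force the strategy-proof matching $M'$ to be simultaneously minimal. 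Pinning down exactly when this invariance holds --- and squaring it with the impossibility result the paper proves for this setting --- is where I expect the real difficulty, and where the argument must exploit structure well beyond generic stable-matching theory.
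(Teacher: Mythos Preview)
You have been misled by a typo in the statement: Theorem~\ref{thm:S2} belongs to Setting~II (new \emph{students} $N$ arrive, schools stay $H$), and the paper proves it in Section~\ref{sec.new_student} via Algorithm~\ref{alg:student-optimal} and Lemma~2. Your proposal instead attacks Setting~I (new schools $H'$), and there the combined claim is false --- the paper's Example~\ref{sec:ex2} exhibits a student who gains by misreporting under \emph{any} minimum-reallocation mechanism, which is exactly the tension you flag in your final paragraph but cannot resolve. Your ``decisive lemma'' that the reallocated set is invariant across all stable matchings of $(S,H\cup H')$ is also false: in Example~\ref{sec:ex1}, $(A,1),(B,2)$ is stable and reallocates nobody, while your upward vacancy-chain DA outputs the student-optimal $(A,2),(B,1)$, reallocating $A$. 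Only the \emph{minimum} stable re-allocations share a common moved set (Corollary~1), and the student-optimal matching need not be one of them; the paper's Setting~I mechanism is school-driven for precisely this reason and explicitly gives up DSIC.

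The paper's actual proof for the intended Setting~II runs in the opposite direction. When students are added, Lemma~2 shows every original student is weakly \emph{worse off} in any minimum stable re-allocation. Algorithm~\ref{alg:student-optimal} lets each unmatched student propose to her most-preferred blocking school, displacing that school's worst incumbent, and an induction shows every assignment it makes is forced in any minimum re-allocation. DSIC is then obtained only under the extra hypothesis that $M$ is the student-optimal matching from $\cR_1$: in that case Algorithm~\ref{alg:student-optimal} is literally a continuation of student-proposing DA on $S\cup N$ and inherits its strategy-proofness.
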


	\section{Mechanism for Adding New Schools}
\label{sec.new_school}

We first provide an example where running the Gale-Shapley algorithm over $S,H\cup H'$ performs more re-allocations than required. This motivates the design of a mechanism that finds a minimum stable re-allocation by iteratively modifying the original matching $M$.

\begin{example}
\label{sec:ex1}
Assume there are 2 students $A,B$ and 2 schools 1,2. The preferences for $A,B$ are (2,1),(1,2), respectively. The preferences for 1,2 are $(A,B),(B,A)$, respectively.
In round $\cR_1$, school 1 has 1 seat, and school 2 has no seats. In round $\cR_2$, school 2 adds 1 seat. $(A,1)$ will be assigned in round $\cR_1$; adding $(B,2)$ in round $\cR_2$ is the only stable matching with no re-allocations. However, running Gale-Shapley over all participants yields $(A,2),(B,1)$, which requires a re-allocation.
\end{example}

%Next, for each student $s_i \in S_M$, define the {\em set of schools preferred by $s_i$}, denoted  $\PSch (s_i)$ by $\{ h_j ~|~ s_i \ \mbox{prefers} \ h_j \ \mbox{to} \ M(s_i) \}$; note that $M(s_i) = \emptyset$ is allowed in this definition. Further, for each school $h_j \in H$, define the {\em set of students that prefer $h_j$ over the school they are assigned to}, denoted $\PSt (h_j)$ to be $\{ s_i ~|~  h_j \in \PSch (s_i)  \}$. Finally, define the {\em best student preferring $h_j$}, denoted $\BStP(h_j)$, to be the student whom $h_j$ prefers the best in the set $\PSt(h_j)$. If $\PSt (h_j) = \emptyset$ then we will define $\BStP (h_j) = \emptyset$; in particular, this happens if $h_j$ is under-filled.

Given a matching $M$, we define $\PSch (s_i)$ to be the set of schools appearing before its matched school in $M$.  We also define $\PSt (h_j)$ to be the set of students who prefer $h_j$ to their matched school in $M$. Finally, we define $\BStP(h_j)$ to be the student whom $h_j$ prefers the best in the set $\PSt(h_j)$. If $\PSt (h_j) = \emptyset$ then we define $\BStP (h_j) = \emptyset$.

\begin{algorithm}
\caption{Mechanism for Adding New Schools in Stable Manner}
\textbf{Input:} Stable Matching $M$ and set $H'$\\
\textbf{Output:} Minimum Stable Re-allocation $M'$\\
\begin{algorithmic} 
\WHILE{$\exists$ $h_j$ with unmet capacity \textbf{and} $\BStP(h_j)\neq \emptyset$}
\STATE Assign $\BStP(h_j)$ to $h_j$
\ENDWHILE
\end{algorithmic}
\end{algorithm}

\begin{lemma}
When new schools are added each student is always matched to his original school, or better, in any minimum stable re-allocation of $M$.
\end{lemma}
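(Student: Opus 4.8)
The plan is to argue by contradiction. Suppose $M'$ is a minimum stable re-allocation with respect to $M$, $S$, $H\cup H'$, and let $D$ be the set of students who are \emph{strictly worse off} in $M'$ than in $M$, i.e. who strictly prefer $M(s)$ to $M'(s)$; assume $D\neq\emptyset$. I will produce a stable matching of the round-$\cR_2$ market that re-allocates strictly fewer students than $M'$, contradicting minimality. Throughout I assume, as is standard in school choice, that every school ranks every student above $\emptyset$ (so a matched school always prefers its assignee to an empty seat); this is what makes the statement meaningful for an arbitrary stable $M$.

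First I would establish two structural facts about an \emph{arbitrary} stable matching $M'$ of the round-$\cR_2$ market (minimality is not needed here). Fix $s\in D$ and put $h=M(s)\in H$. Since $s$ prefers $h$ to $M'(s)$ and $M'$ is stable, $(s,h)$ is not a blocking pair, so $h$ is filled to capacity in $M'$ and $h$ ranks every student of $M'(h)$ above $s$ (the under-filled alternative is excluded since $h$ prefers $s$ to $\emptyset$). Call this Fact~1. Next let $t\in M'(h)\setminus M(h)$ be any student that $M'$ newly brings to $h$. By Fact~1, $h$ prefers $t$ to $s\in M(h)$, so if $t$ preferred $h$ to $M(t)$ then $(t,h)$ would block the stable matching $M$ of the round-$\cR_1$ market; hence $t$ strictly prefers $M(t)$ to $h=M'(t)$, i.e. $t\in D$. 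Call this Fact~2. (Fact~2 already rules out $|D|=1$, and, iterated, shows that inside $D$ each student points to a newcomer sitting on its old seat, so $D$ carries a ``rotation'' of displacements.)

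Now define $M''$: move every student of $D$ back to her school $M(s)$, and leave every other student exactly where $M'$ puts her. This is a legal matching: for a school $h$ with $M(h)\cap D\neq\emptyset$, Fact~2 says every student of $M'(h)$ outside $D$ already lay in $M(h)$, and the students of $M(h)\cap D$ are absent from $M'(h)$, so $M''(h)\subseteq M(h)$ and $h$'s capacity is respected; for every other school $M''(h)\subseteq M'(h)$. A student is re-allocated under $M''$ exactly when she is outside $D$ and re-allocated under $M'$, so the re-allocated set of $M''$ equals that of $M'$ with $D$ removed; since $D\neq\emptyset$, $M''$ re-allocates strictly fewer students than $M'$. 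It therefore suffices to show $M''$ is stable.

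The hard part is exactly this last step, and I expect it to absorb most of the work. Suppose $(s,h)$ blocks $M''$. Then $s$ prefers $h$ to $M''(s)$, hence also to both $M(s)$ and $M'(s)$, because $M''(s)$ is weakly $s$-preferred to each. If $(s,h)$ blocks through some $s'\in M''(h)$ that $h$ ranks below $s$, then $s'$ lies in $M'(h)$ --- making $(s,h)$ a blocking pair of $M'$ --- or in $M(h)$ --- making $(s,h)$ a blocking pair of $M$; both are impossible. So the only surviving possibility is that $h$ is \emph{under-filled} in $M''$ with $h$ preferring $s$ to $\emptyset$; and the occupancy of $h$ can drop from $M'$ to $M''$ only because some of its $M'$-students belonged to $D$ and were reverted, or (for $h\in H$) because some student it held in $M$ was a ``happy mover'' who left it for a strictly better school. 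Excluding a blocking $(s,h)$ in this situation is the main obstacle. I would attack it by a further case analysis that again pits stability of $M$ against stability of $M'$ --- observe that such an $s$ is kept out of $h$ by strictly $h$-preferred students both in $M$ and in $M'$, and that $h$ is full in each of $M$ and $M'$ --- and, if a direct argument is elusive, by repeatedly satisfying such a blocking pair: each step moves one student to a strictly preferred school, so the process terminates, and one must then track the (bounded) change this causes in the number of re-allocated students to keep it below that of $M'$.
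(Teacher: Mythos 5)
Your construction coincides with the paper's: your $M''$ is exactly the matching $M\wedge M'$ the paper forms (revert every worse-off student, keep everyone else at their $M'$-school), your Facts 1--2 are exactly its capacity argument (a school deserted by a worse-off student is full in $M'$ and its newcomers are themselves worse off, else they would block $M$), and your handling of a pair $(s,h)$ that blocks through an occupant $s'\in M''(h)\subseteq M(h)\cup M'(h)$ is its stability argument. The step you leave open --- a pair $(s,h)$ that blocks because $h$ is \emph{under-filled} in $M''$ --- is therefore the one genuine gap in your proposal: you only list lines of attack, and the fallback of ``repeatedly satisfying such blocking pairs'' is not a proof, since satisfying them can pull students outside $D$ away from their $M$-schools and you give no control over the resulting re-allocation count. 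It is only fair to add that the paper's own proof silently skips exactly this case (its stability argument treats only blocking through a matched student), so you have correctly isolated the missing piece rather than invented a spurious difficulty.

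The case can be closed by showing that no school's occupancy drops in passing from $M'$ to $M''$. Fix any school $h$ that loses a student of $D$ (an ``unhappy leaver''). By your Fact 1, $h$ is full in $M'$, and by your Fact 2 every student of $M'(h)\setminus M(h)$ lies in $D$; hence the number of $D$-students entering $h$ equals $c(h)-\lvert M(h)\cap M'(h)\rvert \ge \lvert M(h)\rvert-\lvert M(h)\cap M'(h)\rvert \ge$ the number of $D$-students leaving $h$. Now sum over all schools: every $D$-student leaves exactly one school of $H$ (she cannot be unmatched in $M$, else she would prefer $\emptyset$ to $M'(s)$ and $M'$ would not be stable), while she enters at most one school of $H\cup H'$ in $M'$. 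So the total of ``$D$-entries'' is at most the total of ``$D$-exits'', which forces all the per-school inequalities above to be equalities. Consequently $\lvert M''(h)\rvert=\lvert M'(h)\rvert$ for every school $h$. A school under-filled in $M''$ is thus under-filled in $M'$, and since your blocker $s$ prefers $h$ to $M''(s)$, hence to $M'(s)$, and $h$ prefers $s$ to $\emptyset$, the pair $(s,h)$ would already block $M'$ --- a contradiction. With this observation your argument is complete, and indeed more careful than the paper's, since it explicitly covers both clauses of the blocking-pair definition.
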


\begin{proof}

Assume that $M'$ is a minimum stable re-allocation of $M$, and some students are worse off in $M'$. Let $M^* = M \wedge M'$, i.e. every student gets the match they prefer in $M$ and $M'$.\\\\
We first show that $M^*$ satisfies the capacity constraints of the schools. Let $W$ be the students who did worse in $M'$ (i.e. they are moved back to their original school in $M^*$). If some students in $W$ leave a school going from $M$ to $M'$, then the students who replace them at that school must also be in $W$. To see this, observe that the school must prefer the replacing students to the leaving students. If there is a replacing student not in $W$, they would form a blocking pair with the school in $M$.\\\\
Next we show that $M^*$ is stable. Let $(s_i,h_i),(s_j,h_j)\in M^*$, and $(s_i,h_j)$ be a blocking pair. Then $s_i$ is matched to $h_i$ or worse in both $M$ and $M'$, but $(s_j,h_j)$ must be in one of these matchings, contradicting its stability.\\\\
$M^*$ is  a stable re-allocation of $M$ and has $|W|$ fewer re-allocations than $M'$, a contradiction.

\end{proof}

\begin{proof} {\em of Theorem \ref{thm:S1}:}
The proof is by induction. We show that if a student is matched to a school at any step by the mechanism $\cM_2$, then they must be matched to this school or better in any mechanism that computes a minimum stable re-allocation.\\
\\
At step 0, this is true by Lemma 1.\\
Assume at step $n$, $\cM_2$ moves student $s_i$ to school $h_j$. This occurs when $h_j$ has unmet capacity and $s_i$ is $\BStP(h_j)$. Any student that $h_j$ prefers more than $s_i$ and who is currently matched to a better school than $h_j$ cannot be matched to $h_j$ in any minimum stable reallocation, by the inductive hypothesis. So $s_i$ must be matched to $h_j$ or better.\\\\
We conclude that our algorithm moves students from their original schools only when they have to be moved, thus performing the minimum number of re-allocations.\\

\end{proof}

\begin{corollary}
All minimum stable re-allocations move the same set of students $R$.
\end{corollary}
\begin{corollary}
The algorithm returns the school-optimal minimum stable re-allocation.
\end{corollary}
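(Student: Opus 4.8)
The plan is to read school-optimality off as a by-product of the proof of Theorem~\ref{thm:S1}. That proof actually shows more than that $\cM_2$ outputs \emph{a} minimum stable re-allocation $M_{\mathrm{alg}}$: by induction it establishes that for every student $s$ and every minimum stable re-allocation $M'$, student $s$ weakly prefers her assignment $M'(s)$ to $M_{\mathrm{alg}}(s)$ — at step $0$ this is Lemma~1, and each time $\cM_2$ re-assigns a student it does so only to a school that must dominate her current one in \emph{every} minimum stable re-allocation. In other words, $M_{\mathrm{alg}}$ is \emph{student-pessimal} among all minimum stable re-allocations; and being a minimum stable re-allocation it is in particular stable. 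I would state this as the first step.

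The second step fixes an arbitrary minimum stable re-allocation $M'$ and a school $h_j$ and shows $h_j$ does not strictly prefer the set $M'(h_j)$ to the set $M_{\mathrm{alg}}(h_j)$; since school preferences over sets are the responsive extension of their strict rankings of individual students, this is exactly what "school-optimal" means, and ranging over all $M'$ and $h_j$ yields the corollary. Suppose for contradiction that $h_j$ strictly prefers $M'(h_j)$ to $M_{\mathrm{alg}}(h_j)$. By responsiveness there is a student $s \in M'(h_j)\setminus M_{\mathrm{alg}}(h_j)$ such that either $h_j$ is filled to capacity in $M_{\mathrm{alg}}$ and ranks $s$ above the least-preferred student it holds there, or $h_j$ is under-filled in $M_{\mathrm{alg}}$ and ranks $s$ above $\emptyset$ (this elementary fact about responsive preferences — comparing the two student sets sorted by $h_j$'s ranking — is the one routine point I would check). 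Since $s \in M'(h_j)$ we have $M'(s)=h_j$, and since $s\notin M_{\mathrm{alg}}(h_j)$ we have $M_{\mathrm{alg}}(s)\neq h_j$; by Step~1, $s$ weakly prefers $h_j=M'(s)$ to $M_{\mathrm{alg}}(s)$, hence strictly prefers it. Then $(s,h_j)$ is a blocking pair for $M_{\mathrm{alg}}$ — the first clause of the blocking-pair definition in the filled case, the second in the under-filled case — contradicting stability of $M_{\mathrm{alg}}$.

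An alternative route I would mention but not pursue is to invoke the lattice of stable matchings for the round-$\cR_2$ instance: Step~1 says the student-pessimal join $M_{\mathrm{alg}}\vee M'$ (each student keeps her worse assignment) equals $M_{\mathrm{alg}}$, and the classical fact that such a join assigns each school its more-preferred set gives school-optimality immediately; one only has to note the join of two minimum stable re-allocations is again one, which follows from Corollary~1 (they all move the same student set).

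I do not anticipate a serious obstacle: the substantive work lives inside the proof of Theorem~\ref{thm:S1}, and what remains is translating the per-student pessimality statement into a per-school optimality statement. The only place needing care is the bookkeeping around responsive set-preferences — making "$h_j$ strictly prefers $M'(h_j)$" precise and extracting the single witness $s$ — and, if one wishes to avoid it, being careful that $|M_{\mathrm{alg}}(h_j)|$ and $|M'(h_j)|$ behave as expected, which can be taken from the Rural Hospitals theorem or sidestepped by the filled/under-filled case split above.
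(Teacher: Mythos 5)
Your proof is correct and matches the paper's intended justification: the corollary is stated without an explicit proof, as an immediate consequence of the induction in the proof of Theorem~\ref{thm:S1} (which, as you observe, shows the output is student-pessimal among minimum stable re-allocations, using Lemma~1 for the unmoved students), combined with exactly the responsive-preferences/blocking-pair step you spell out. The one caveat you flagged yourself --- extracting the witness student requires that schools never hold students they rank below $\emptyset$ --- is the usual individual-rationality convention and does not change the argument.
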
 
\begin{corollary}
The algorithm runs in $O(|R|*|H\cup H'|)$ time.
\end{corollary}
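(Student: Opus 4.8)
The plan is to split the running time into (i) the number of iterations of the while-loop and (ii) the work done per iteration, and to show these multiply to $O(|R|\cdot|H\cup H'|)$.

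For (i), the key observation is that each iteration performs a single reassignment that moves a student strictly up its preference list. Hence the current matching only improves for every student as the loop runs, no student is ever unmatched, and any student that is reassigned even once ends up in the output $M'$ at a school it strictly prefers to its $M$-school; by Corollary~1 the set of such students is exactly $R$, so the loop only ever touches students of $R$. To count the reassignments I would use the induction inside the proof of Theorem~\ref{thm:S1}: once the mechanism assigns a student $s$ to a school $h_j$, that proof shows $s$ is matched to $h_j$-or-better in every minimum stable re-allocation, in particular in $M'$ itself --- so the mechanism never places $s$ strictly above its final school $M'(s)$. Therefore the nonnegative potential $\Phi=\sum_{s}\bigl(\operatorname{rank}_s(\text{current match of }s)-\operatorname{rank}_s(M'(s))\bigr)$ starts at $\sum_{s\in R}\bigl(\operatorname{rank}_s(M(s))-\operatorname{rank}_s(M'(s))\bigr)\le |R|\cdot|H\cup H'|$, stays $\ge 0$, and drops by at least $1$ at every iteration; so there are at most $|R|\cdot|H\cup H'|$ iterations. (With the natural implementation that fills one eligible school to capacity before moving to the next, one can hope to show each student is moved only once, sharpening this to $|R|$ iterations.)

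For (ii), I would maintain, for each school $h_j$, its current free capacity and a pointer into $l(h_j)$ used to evaluate $\BStP(h_j)$: to recompute $\BStP(h_j)$ one advances the pointer past students currently matched to $h_j$-or-better until either a member of $\PSt(h_j)$ is reached (the new $\BStP(h_j)$) or $\emptyset$ is reached (no eligible student). The structural fact that makes this cheap is that $\PSt(h_j)$ only \emph{shrinks} as the run proceeds --- a student leaves it only by becoming matched to $h_j$-or-better, and never re-enters --- so each school's pointer only moves forward. An iteration then locates an eligible school (one with free capacity and $\BStP\neq\emptyset$), performs the reassignment, decrements the free capacity of $h_j$, increments that of the school just vacated, and re-evaluates the two affected $\BStP$ values; keeping the eligible schools in a list, locating one and doing the bookkeeping is $O(1)$ plus amortized pointer work (alternatively, if one simply rescans all $|H\cup H'|$ schools each time and there are only $|R|$ iterations, the same bound results).

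The step I expect to be the real obstacle is making the iteration count tight: the crude potential argument already yields $O(|R|\cdot|H\cup H'|)$ reassignments, but showing that vacating a seat when a student climbs cannot force an already-settled student to move a second time --- which is what one needs for the clean ``$|R|$ iterations times $O(|H\cup H'|)$ work'' accounting --- appears to require committing to a definite order for processing eligible schools and verifying that order behaves. Everything else (monotonicity of the sets $\PSt(\cdot)$, the amortized pointer maintenance, and the ``never overshoot $M'$'' property) is routine once that point is settled.
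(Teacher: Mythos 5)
The paper states this corollary without any proof, so there is nothing to compare against; judging your argument on its own merits, the iteration-count half is right but the per-iteration accounting has a genuine gap. Your potential argument in (i) is sound: only students of $R$ are ever reassigned, every reassignment strictly improves a student's match and never overshoots the final output, so the number of while-loop iterations is at most $|R|\cdot|H\cup H'|$. (Your parenthetical hope that each student moves only once is not something you can count on: with an unlucky order of processing eligible schools a student can be placed at an intermediate school and later climb again when a better school's $\BStP$ changes, so the crude $|R|\cdot|H\cup H'|$ count is the one you actually have.)

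The gap is in (ii). Your amortized charge for the pointer maintenance is not bounded by $O(|R|\cdot|H\cup H'|)$: each school's pointer may have to skip over \emph{every} student on its preference list who is matched to that school or better, so the total pointer advancement over the run is $\sum_j |l(h_j)|$, which is $\Theta(|S|\cdot|H\cup H'|)$ in the worst case and can dwarf $|R|\cdot|H\cup H'|$ (e.g.\ $|R|=1$ with complete preference lists). The same issue afflicts finding an eligible school and certifying termination: to conclude that no underfilled school has $\BStP\neq\emptyset$ you must inspect those schools' lists, which again is not covered by your budget. Your fallback --- rescan all $|H\cup H'|$ schools per iteration with ``only $|R|$ iterations'' --- rests on exactly the unproven single-move claim, and even then evaluating $\BStP(h_j)$ during a rescan is not $O(1)$. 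As written, your argument yields a bound of the form $O\bigl(|R|\cdot|H\cup H'| + \sum_j |l(h_j)|\bigr)$, or the stated bound only in a model that counts reassignment operations and treats $\BStP$ queries as free; to honestly get $O(|R|\cdot|H\cup H'|)$ total time you would need either such an operation-counting convention or a preprocessing/data-structure assumption (e.g.\ the sets $\PSt(h_j)$ given up front), and you should state which one you are using.
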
 

\begin{algorithm}
\caption{Other Minimum Stable Re-allocations With Additional School Capacity}
\textbf{Input:} Minimum Stable Re-allocation $M'$, moved students $R$\\
\textbf{Output:} Minimum Stable Re-allocation $M^\dagger$\\
\begin{algorithmic} 
\FORALL{schools $h_j$}
\STATE $c_j\leftarrow\#$ of students in $R$ matched to $h_j$
\STATE $Barrier(h_j)\leftarrow$ Best student $\not \in R$ preferring $h_j$.
\ENDFOR
\STATE $M^\dagger\leftarrow Stable$-$Match(R,c)$ respecting $Barrier$
\RETURN $M^\dagger\wedge M'$
\end{algorithmic}
\end{algorithm}

Since any minimum stable re-allocation moves the same set of students, it is easy to obtain the other minimum stable re-allocations. Let $R$ be the set of students moved in $M'$ and $c_{j}(R)$ the number of students in $R$ for school $h_{j}$. Let $Barrier(h_j)$ be the $\BStP(h_j)$ not in $R$, and $M^\dagger$ be a stable matching on $R$, and capacities $c_{j}(R)$ for schools, where no school admits a student worse than its barrier. Then $M^\dagger\wedge M^*$ is a minimum stable re-allocation.\\
\\
Unfortunately, no mechanism that finds a minimum stable re-allocation can be incentive compatible as shown in Example \ref{sec:ex2}.
\begin{example}
\label{sec:ex2}
Assume there are 2 students $A,B$ and 3 schools 1,2,3. The preferences for $A,B$ are (2,1,3),(1,2,3), respectively. The preferences for 1,2,3 are $(A,B),(B,A),(A,B)$, respectively.
In $R1$, school 1 has 1 seat, and the other schools have no seats. In $R2$, schools 2 and 3 add 1 seat each. Under truthful reporting, $(A,1)$ will be assigned in $R1$; assigning $(B,2)$ in $R2$ is the only stable matching with no re-allocations. However, if $A$ reports their preference as (2,3,1), the only stable matching in $R2$ is $(A,2),(B,1)$.
\end{example}

\begin{proposition}
Adding a new school, increasing capacities of some schools, and removing students from schools are all equivalent.
\end{proposition}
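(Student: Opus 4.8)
The plan is to prove the proposition by a cycle of instance-level transformations,
\[
\text{(add a new school)}\ \longrightarrow\ \text{(increase capacities)}\ \longrightarrow\ \text{(remove students)}\ \longrightarrow\ \text{(add a new school)},
\]
each converting the round-$\cR_2$ problem for one modification into the round-$\cR_2$ problem for the next while preserving stability and the set of re-allocated students, and preserving the re-allocation count up to a known additive offset discussed below. The unifying observation is that each of the three modifications leaves the round-$\cR_1$ matching $M$ stable except for blocking pairs of the under-filled-school type (condition~2 of Section~\ref{sec:stable}): opening a school creates one empty school; enlarging $c(j)$ makes $h_j$ under-filled; and deleting a student matched to $h_j$ makes $h_j$ under-filled and creates no blocking pair of any other type, because a deletion alters the assigned-student set of no school other than $h_j$ and so cannot produce a new condition-1 blocking pair anywhere. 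Hence all three feed Algorithm~1 exactly the kind of input it expects, and exhibiting the transformations shows that Theorem~\ref{thm:S1}, Algorithm~1, and the corollaries following it apply verbatim to all three settings.

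The first two legs are exact and amount to bookkeeping. For \emph{add $\to$ increase}: given a school $h'$ of capacity $c$ opening in $\cR_2$, instead place $h'$ in the $\cR_1$ instance already, with capacity $0$; then $M$ is untouched and still stable, since a capacity-$0$ school is filled to capacity and can therefore be the school of a blocking pair under neither condition~1 (no student is assigned to it) nor condition~2 (it is not under-filled), and raising its capacity to $c$ in $\cR_2$ reproduces the add-school instance exactly. For \emph{increase $\to$ remove}: to enlarge each $c(j)$ by $k_j$, give $h_j$ capacity $c(j)+k_j$ in the $\cR_1$ instance and add $k_j$ \emph{phantom} students, each with list $[\,h_j,\emptyset\,]$, placed at the top of $h_j$'s preference list, and let $\tilde M$ extend $M$ by assigning each phantom to its school; one checks that $\tilde M$ is stable (a phantom gets its first choice; a real student who prefers an enlarged $h_j$ to her $\tilde M$-school cannot block it because $h_j$ is now full and prefers every occupant to her, its original occupants by stability of $M$ and its phantoms by construction; and unmodified schools are never occupied or desired by phantoms), so that deleting all phantoms in $\cR_2$ returns precisely the enlarged-capacity instance with matching $M$.

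The leg I expect to be the main obstacle is \emph{remove $\to$ add}, because a deleted student's seat is vacated whether or not one wants to move anyone, so the correspondence cannot be seat-for-seat exact. Let $T$ be the students to be deleted; without loss of generality each $s\in T$ is matched in $M$ to a real school $h_{j(s)}$ (deleting an unmatched or $\emptyset$-matched student is vacuous). In $\cR_2$, for each $s\in T$ open a dedicated school $h_s$ of capacity $1$ with list $[\,s\,]$, and prepend $h_s$ to $s$'s list. Then in \emph{any} stable matching of the augmented instance $s$ must sit at $h_s$ (otherwise $h_s$ is empty, prefers $s$ to $\emptyset$, and $s$ prefers $h_s$ to everything, a blocking pair), which vacates $h_{j(s)}$ exactly as a deletion would; and once parked there $s$ is never again the best preferring student of a real school and blocks nothing, so the restriction to $S\setminus T$ and the real schools is forced to be a stable re-allocation of $M|_{S\setminus T}$, and conversely any such extends back. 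Thus the two instances share the same minimum stable re-allocation on $S\setminus T$, and their optima differ by exactly $|T|$, a quantity known in advance, making this an affine, and hence polynomial-time, reduction; under the natural convention that a deleted student's forced departure is itself counted as a re-allocation, it is an exact equivalence.

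Chaining the three transformations closes the cycle and proves the proposition, and in particular shows that Algorithm~1, Theorem~\ref{thm:S1}, and the corollaries on uniqueness of the moved set, school-optimality, and running time hold equally for capacity increases and for student removals. I would also note the more symmetric alternative that enlarging $c(j)$ by $k$ can be realized directly by opening a capacity-$k$ clone of $h_j$, namely a copy of $h_j$'s preference order inserted immediately below $h_j$ on every student's list; the only delicate point there is to verify that the clone neither creates nor destroys a stable re-allocation, which again follows from the under-filled-blocking-pair bookkeeping above.
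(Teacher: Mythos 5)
Your proposal is correct and follows essentially the same route as the paper: the identical cycle of reductions (new school as a capacity-$0$ school whose capacity is raised, capacity increases simulated by dummy/phantom students who are later removed, and removals simulated by a newly opened school that absorbs the removed students). Your only deviations are cosmetic --- one dedicated capacity-$1$ school per removed student placed at the top of that student's list instead of a single absorbing school, plus explicit stability checks and the $|T|$ counting offset that the paper leaves implicit.
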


\begin{proof}
We first reduce adding a new school to increasing capacities of some schools. This can be achieved by fixing the capacity of the new school to be 0 in round $\cR_1$, and then increasing its capacity in round $\cR_2$.\\\\
To reduce increasing capacities of schools to removing students from schools, update the capacities of schools in round $\cR_1$ and add dummy students who fill this extra capacity. In round $\cR_2$ the dummy students are removed.\\\\
Removing students reduces to adding a new school by adding a school in round $\cR_2$ who only likes the students to be removed, and whom all the students to be removed like more than their original schools.
\end{proof}

	\section{Mechanism for Adding a New Student}
\label{sec.new_student}

In the same vein as the previous section, we ask for a mechanism that finds a minimum stable re-allocation, under the addition of some new students $N$ in round $\cR_2$. Our mechanism finds a minimum stable re-allocation over any stable matching $M$. If $M$ produced by $\cM_1$ is a student-optimal matching we show that $(\cM_1, \cM_2)$ is DSIC w.r.t reporting preference lists of students. We first show an example where running Gale-Shapley with the new students doesn't compute a minimum stable re-allocation over the underlying stable matching.

\begin{example}
Assume there are 3 students $A,B,C$ and 2 schools 1,2,3. The preferences for $A,B,C$ are (2,1,3),(1,2,3),(3,1,2), respectively. The preferences for 1,2 are $(A,B,C),(B,A,C),(C,A,B)$, respectively. Each school has 1 seat.
In round $\cR_1$, students $A,B$ arrive. In round $\cR_2$, student $C$ arrives. Assume that $(A,1),(B,2)$ is the stable matching computed in round $\cR_1$. Leaving these pairings unchanged and adding $(C,3)$ in round $\cR_2$ is the only stable matching with no re-allocations. However, running Gale-Shapley from scratch yields $(A,2),(B,1),(C,3)$, which requires a re-allocation.
\end{example}

For each school $h_j$ we defined Worst-Student-Accepted($h_j$) as the student who is currently matched to $h_j$ and is preferred the least among all the other students matched to $h_j$.

\begin{algorithm}
\caption{Mechanism for Adding New Students in Stable Manner}
\label{alg:student-optimal}
\textbf{Input:} Stable Matching $M$, set $N$ of new students \\
\textbf{Output:} Minimum Stable Re-allocation $M'$\\
\begin{algorithmic} 
\WHILE{$\exists$ $s_i$ unmatched and $(s_i,h_j)$ is blocking for some $h_j$} 
\STATE Find $h_k$ most-preferred by $s_i$ among all schools that $(s_i,h_k)$ is blocking
\IF{$h_k$ is full}
\STATE Remove Worst-Student-Accepted($h_k$)
\ENDIF
\STATE Assign $s_i$ to $h_k$

\ENDWHILE
\end{algorithmic}
\end{algorithm}

\begin{lemma}
When new students are added each student in $M$ is matched to his original school, or worse, in any minimum stable re-allocation of $M$.
\end{lemma}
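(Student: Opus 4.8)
The plan is to mirror the structure of Lemma 1 (the analogous statement for adding new schools), but now arguing in the opposite direction: with new students arriving, the original students can only be bumped to \emph{worse} schools, never better ones. So I would again form the ``meet'' of $M$ and a supposed minimum stable re-allocation $M'$, but this time taking, for each student in $S$, the school they \emph{prefer less} among their assignments in $M$ and $M'$ (and, for new students in $N$, simply their $M'$-assignment). Call this combined assignment $M^*$. The goal is to show $M^*$ is a valid, stable re-allocation of $M$ that moves no more students than $M'$, forcing $M' = M^*$ on the original students, i.e. nobody in $S$ does strictly better in $M'$.

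First I would check $M^*$ respects capacities. The only worry is a school $h_j$ receiving too many students; but the students assigned to $h_j$ in $M^*$ are exactly those assigned to $h_j$ in \emph{whichever} of $M, M'$ is worse for them. The subtle point (the analogue of the ``replacing students must also be in $W$'' argument) is that when we push some original student $s$ back to her $M$-school $h_j$, we need room there; the seat she vacated in $M'$ at her better school must have been taken by someone, and that someone is either a new student or an original student who also got pushed back — in either case they do not end up competing for $h_j$'s seat in $M^*$. I would make this precise by a counting/exchange argument on each school: the set of $S$-students landing at $h_j$ under $M^*$ is contained in the $M$-population of $h_j$ together with new students, and new students only occupy $h_j$ in $M^*$ if $h_j$ strictly prefers them to the original occupant they displaced, so the count cannot exceed $c(j)$.

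Next, stability of $M^*$. Suppose $(s_i, h_k)$ blocks $M^*$. Then $s_i$ prefers $h_k$ to her $M^*$-school, hence (by definition of $M^*$) $s_i$ prefers $h_k$ to \emph{both} her $M$-school and her $M'$-school. On the school side, $h_k$ either is under-filled in $M^*$ or prefers $s_i$ to someone assigned to $h_k$ in $M^*$; in either case, using that $M^*$'s population at $h_k$ sits inside the union of $M$'s and $M'$'s populations at $h_k$, I would trace the relevant seat back to whichever of $M$ or $M'$ it came from and derive that $(s_i, h_k)$ already blocked that matching — contradicting stability of $M$ or of $M'$. The one case needing care is $h_k$ under-filled in $M^*$ but full in, say, $M'$: then the ``extra'' occupants of $h_k$ in $M'$ are students who did \emph{better} in $M'$ than in $M$, so in $M$ the seat was free (or held by someone $h_k$ likes less), again yielding a block of $M$.

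Finally, $M^*$ moves (a subset of) the students moved by $M'$: any $S$-student fixed at her $M$-school by $M'$ is also fixed by $M^*$. If some $S$-student strictly improved in $M'$, then $M^*$ would be a strictly cheaper stable re-allocation, contradicting minimality of $M'$; hence no original student does better, which is the claim. I expect the main obstacle to be the capacity bookkeeping at each school in the first step — precisely arguing that the displaced-and-returned students, the new students, and the stayers all fit, without double-counting a seat — since this is where the asymmetry between ``adding schools'' and ``adding students'' really bites and the clean sentence from Lemma 1 does not transcribe verbatim.
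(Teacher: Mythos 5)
Your skeleton matches the paper's: form $M^* = M \vee M'$ (each original student takes the worse of her two assignments, new students keep their $M'$ seats), show $M^*$ is a capacity-respecting stable matching, and contradict minimality of $M'$; your stability sketch is essentially the paper's argument. The genuine gap is exactly the step you flag: the capacity bookkeeping, and two of the assertions you lean on there are false as stated. If the seat at $h_j$ vacated in $M'$ by a returning student was taken by a new student $n$, then $n$ \emph{keeps} $h_j$ in $M^*$ (new students retain their $M'$ assignments), so she very much does compete with the returner for that seat; likewise an original student who moved \emph{down} into $h_j$ in $M'$ also keeps $h_j$ in $M^*$. Consequently your containment claim (``the $S$-students landing at $h_j$ under $M^*$ are contained in the $M$-population of $h_j$ together with new students'') fails, and the remark that $h_j$ prefers the new student to the occupant she displaced does not create a seat: with $c(j)=1$ you would have the returner plus her replacement, i.e.\ $c(j)+1$ students at $h_j$, and nothing in your sketch excludes this configuration.

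What closes the gap --- and what the paper's proof asserts as its key claim --- is the stronger structural fact that such replacements cannot occur at all: at any school that a member of $B$ (an original student who improved in $M'$) vacates, every student occupying it in $M'$ but not in $M$ is itself in $B$, in particular neither a new student nor a demoted original student; hence in $M^*$ all those replacers leave in turn, the population of $h_j$ sits inside its $M$-population together with its unchanged occupants, and the bound $c(j)$ follows. To actually prove that fact you should look at the schools members of $B$ move \emph{into}: if $s\in B$ enters $h$ in $M'$, stability of $M$ forces $h$ to be full in $M$ with students $h$ prefers to $s$, so any student in $M(h)\setminus M'(h)$ would form a blocking pair with $h$ in $M'$ unless she too improved, i.e.\ she is in $B$; counting departures and arrivals of $B$-students over all schools then shows they balance at every school, which yields the claim above. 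Without an argument of this type (which is precisely where, as you note, the asymmetry with Lemma 1 bites), the construction of $M^*$ does not go through.
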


\begin{proof}
Assume that $M'$ is a minimum stable re-allocation of $M$, and some students in $M$ are matched to better schools in $M'$. Let $M^* = M \vee M'$, i.e. every student gets the worse school between $M$ and $M'$.\\\\
We first show that $M^*$ satisfies the capacity constraints of the schools. Let $B$ be the students who did better in $M'$, and who are not in $N$ (i.e. they are moved back to their original school in $M^*$). If some students in $B$ leave a school going from $M$ to $M'$, then the students who replace them at that school must also be in $B$. To see this, observe that the school must prefer the leaving students to the replacing students. If there is a replacing student not in $B$, they form a blocking pair with their original school in $M$.\\\\
Next we show that $M^*$ is stable. Let $(s_i,h_i),(s_j,h_j)\in M^*$, and $(s_i,h_j)$ be a blocking pair. Then $h_j$ is matched to $s_j$ or worse in both $M$ and $M'$, but $(s_i, h_i)$ must be in one of these matchings, contradicting stability. \\\\
$M^*$ is  a stable re-allocation of $M$ and has $|B|$ fewer re-allocations than $M'$, a contradiction.

\end{proof}

\begin{proof} {\em of Theorem \ref{thm:S2}:}
The proof is by induction. We show that if a student is matched to a school at any step by $\cM_2$, then they must be matched to this school or worse in any minimum reallocation.\\\\
At step 0, this is true by Lemma 2.\\
Assume at step $n$ $\cM_2$ moves student $s_i$ to school $h_j$. For any school $h_k$ that $s_i$ prefers to $h_j$, $h_k$ must be full and $s_i$ must be worse than the worst student currently accepted at $h_k$. Now, if $s_i$ were to be admitted to $h_k$, then there must be some student currently admitted to $h_k$ that must be removed. But by the inductive hypothesis, this student must be matched to a worse school, forming a blocking pair with $h_k$.
\end{proof}

\begin{corollary}
All minimum stable re-allocations move the same set of students $R$.
\end{corollary}
\begin{corollary}
The algorithm returns the student-optimal minimum stable re-allocation.
\end{corollary}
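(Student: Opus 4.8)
The plan is to read off the result from the induction already performed in the proof of Theorem~\ref{thm:S2}, rather than re-deriving anything from scratch. Recall what ``student-optimal'' must mean here: the re-allocation $M'$ produced by Algorithm~\ref{alg:student-optimal} is a minimum stable re-allocation such that every student weakly prefers their assignment in $M'$ to their assignment in \emph{every} minimum stable re-allocation $M''$. That $M'$ is a minimum stable re-allocation is exactly Theorem~\ref{thm:S2}; what remains is the domination statement. For that I would invoke the invariant established in that proof: whenever $\cM_2$ (re-)assigns a student $s_i$ to a school $h_j$, $s_i$ is matched to $h_j$ or worse in every minimum stable re-allocation.

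First I would dispatch the students that are matched in $M'$. If $s_i \in S$ is never moved by $\cM_2$, it stays at its $M$-school, and Lemma~2 says it is matched to that school or worse in every minimum stable re-allocation. If instead $s_i$ is matched in $M'$ after being (re-)assigned by $\cM_2$ at least once, I would apply the invariant to the \emph{last} step at which $s_i$ is assigned: its final school in $M'$ is then, for $s_i$, at least as good as $s_i$'s school in every minimum stable re-allocation $M''$. So for every student matched in $M'$, $M'$ is weakly at least as good as any minimum stable re-allocation.

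The one point that needs real care — and the main obstacle — is a student $s_i$ left \emph{unmatched} by $M'$ (a newly arrived student of $N$, or one of $S$ displaced from its seat): I must rule out that some minimum stable re-allocation $M''$ matches $s_i$ to a school it actually wants. When Algorithm~\ref{alg:student-optimal} halts, $(s_i,h)$ is not a blocking pair of $M'$ for any school $h$ that $s_i$ prefers to being unmatched; since every assignment the algorithm makes (by its case-2 blocking condition), as well as every pair of $M$, has the school preferring its student to $\emptyset$, such an $h$ must be filled to capacity in $M'$ by a set $T$ of students all of whom $h$ prefers to $s_i$. Each $t\in T$ is, by the invariant (or by Lemma~2 if $t$ was never moved, since then $t$ sat at $h$ in $M$), matched to $h$ or worse in $M''$, and $h$ prefers each such $t$ to $\emptyset$; hence if $s_i$ occupied $h$ in $M''$, some $t\in T$ would be strictly below $h$ in $M''$ and $(t,h)$ would block $M''$. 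So $h$ is full in $M''$ with students all better than $s_i$, and $(s_i,h)$ cannot block $M''$, so $s_i$ cannot be stably matched to $h$ in $M''$; since being matched to a school ranked after $\emptyset$ would be blocked by $\emptyset$, $s_i$ is unmatched in $M''$ as well.

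Combining the cases, $M'$ weakly dominates every minimum stable re-allocation from every student's point of view; being itself a minimum stable re-allocation, it is the unique one with this property, which incidentally also shows the algorithm's output does not depend on the order in which unmatched students are processed. Apart from the unmatched-student analysis above, the whole argument is a direct restatement of the induction in the proof of Theorem~\ref{thm:S2} together with Lemma~2.
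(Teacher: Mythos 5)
Your proposal is correct and takes essentially the same route as the paper, which offers no separate argument for this corollary and treats it as an immediate consequence of the induction invariant in the proof of Theorem~\ref{thm:S2} together with Lemma~2: every assignment $\cM_2$ makes is weakly preferred by that student to her assignment in any minimum stable re-allocation. Your extra analysis of students left unmatched by $M'$ is diligence the paper leaves implicit (note only that you silently skip the subcase of an under-filled school that ranks the student below $\emptyset$, which is disposed of by individual rationality), but it does not change the approach.
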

\begin{corollary}
The algorithm runs in $O((|N\cup R|)*|H|)$ time.
\end{corollary}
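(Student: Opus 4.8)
The plan is to charge the total running time to (i) the number of iterations of the while loop and (ii) the work done inside each iteration, and to show that both, aggregated over the whole execution, are $O(|N\cup R|\cdot|H|)$. The first thing I would establish is that the only students ever selected by the loop belong to $N\cup R$. A student of $S$ is selected only after it has become unmatched, i.e.\ only after a ``Remove Worst-Student-Accepted'' step has displaced it; by the inductive invariant proved for Theorem~\ref{thm:S2} — once $\cM_2$ assigns $s_i$ to a school, $s_i$ ends up there or at a strictly less preferred school in the final matching — such a student is moved in $M'$, hence lies in $R$ (invoking the earlier corollary that $R$ is precisely the set of students moved in every minimum stable re-allocation). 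The new students contribute exactly the set $N$. So at most $|N\cup R|$ distinct students are ever processed.

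The key step is a monotonicity claim: for each fixed student $s_i$, the sequence of schools $\cM_2$ successively assigns $s_i$ to is strictly decreasing in $s_i$'s preference order. The engine behind this is that schools only ``improve'' during the run: a school loses an occupant solely through ``Remove Worst-Student-Accepted'', which is immediately followed by admitting a student it prefers, so once a school is full it stays full, and its worst accepted student is non-decreasing in the school's own preference order throughout. Consequently, once $s_i$ is displaced from $h_k$ the pair $(s_i,h_k)$ can never block again; and any school $h$ with $s_i$ preferring $h$ to $h_k$ was non-blocking at the moment $s_i$ was placed at $h_k$ (the algorithm always picks the most preferred blocking school), and since $h$'s worst accepted student only improves and $h$ never reopens capacity, $(s_i,h)$ stays non-blocking for $s_i$ thereafter (the cases involving $\emptyset$ are routine bookkeeping). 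Hence every re-assignment of $s_i$ strictly descends $l(s_i)$, so $s_i$ is assigned at most $|l(s_i)|\le|H|+1$ times, and the loop runs at most $O(|N\cup R|\cdot|H|)$ times in total.

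For the per-iteration cost I would maintain, for each student, a pointer into its preference list that is only ever advanced — legitimate by the monotonicity claim, since every school at or above the last school $s_i$ occupied is permanently non-blocking for $s_i$, so the search for the next blocking school may resume where it stopped — and, for each school, a structure supporting Worst-Student-Accepted, membership test, and replace in $O(1)$ amortized time (exploiting that its worst accepted student moves monotonically in its preference order). The aggregate forward travel of all student pointers is $O(|l(s_i)|)=O(|H|)$ per student, i.e.\ $O(|N\cup R|\cdot|H|)$ overall, which dominates the constant bookkeeping per iteration; summing the two gives the claimed bound. I expect the monotonicity claim — in particular, ruling out that a school $s_i$ was bumped from, or any school $s_i$ ranks above it, re-qualifies as a blocking partner later — to be the only non-routine point; the remainder is accounting plus a standard choice of data structures.
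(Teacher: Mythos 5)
Your proof is correct and matches the reasoning the paper leaves implicit (the corollary is stated without proof, but the paper's remark that the algorithm is a ``continuation of Gale-Shapley'' on the round $\cR_2$ participants is exactly your argument): only students in $N\cup R$ are ever processed, each descends its preference list monotonically and hence is assigned at most $|H|$ times, and standard pointer/bookkeeping gives amortized constant work per proposal. One small wording caveat: the worst accepted student of an \emph{under-filled} school can get worse when it admits an acceptable new student, so the monotonicity of the worst student should be claimed only for full schools --- which is all your non-blocking-preservation argument actually uses, so the proof stands.
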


\begin{algorithm}
\caption{Other Minimum Stable Re-allocations With New Students}
\textbf{Input:} Minimum Stable Re-allocation $M'$, moved students $R$\\
\textbf{Output:} Minimum Stable Re-allocation $M^\dagger$\\
\begin{algorithmic} 
\FORALL{schools $h_j$}
\STATE $c_j\leftarrow\#$ of students in $R$ matched to $h_j$
\ENDFOR
\FORALL {students $s_i \in R$}
\STATE $Safety(s_i)\leftarrow$ Best school $h_j$: $\exists s_k$ $\not \in R$ admitted to $h_j$, $h_j$ prefers $s_i$ over $s_k$.
\ENDFOR
\STATE $M^\dagger\leftarrow Stable$-$Match(R,c)$ respecting $Safety$
\RETURN $M^\dagger\vee M'$
\end{algorithmic}
\end{algorithm}

As with the preceding setting, it is easy to find other minimum stable re-allocations by finding a stable matching $M^\dagger$ on $R,c(R)$. For each student in $R$, we define $Safety(s_i)$ to be the best school who prefers $s_i$ to its worst admitted student not in $R$. We require that $M^\dagger$ does not match a student to a school worse than their safety.
Then $M^\dagger \vee M'$ is a minimum stable-reallocation.\\\\
%
%If $M$ was the student-optimal matching, then the algorithm can be seen as a continuation of Gale-Shapley on the round $\cR_2$ participants, since each student proposes to schools in order of preference. In other words, it yields the same result as running Gale-Shapley from scratch. It follows that in this case $M^*$ is DSIC.
Algorithm \ref{alg:student-optimal} produces a stable minimum reallocation $M'$ for any round $\cR_1$ stable matching $M$. Under the special case that $M$ is a student optimal matching, $M'$ is the same as the matching produced by running Gale-Shapley on the entire set of students from both rounds. This follows as the algorithm can be seen as a continuation of Gale-Shapley on the round $\cR_2$ participants, since each student proposes to schools in order of preference. The DSIC property of Gale-Shapley implies incentive compatibility of our algorithm when $M$ is a student-optimal matching.
\\
\begin{proposition}
Removing a school, decreasing capacities of some schools, and adding new students are equivalent.
\end{proposition}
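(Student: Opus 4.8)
The plan is to follow the template of Proposition 1 and prove the three claims ``equivalent'' by exhibiting a cycle of polynomial-time reductions among the corresponding two-round re-allocation problems: removing a school reduces to decreasing capacities, decreasing capacities reduces to adding new students, and adding new students reduces to removing a school. Each reduction should map a round-$\cR_1$ instance together with its stable matching $M$ to an instance of the target setting in such a way that stable matchings correspond to stable matchings and the number of re-allocated original students is preserved (in one direction up to a fixed additive constant that does not affect the minimum), so that a mechanism computing a minimum stable re-allocation for one setting yields one for the others.

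For the first reduction, removing a school $h_j$ in round $\cR_2$ is literally the special case of decreasing capacities in which $c(j)$ is dropped to $0$: a school of capacity $0$ has no matched student and is never under-filled, hence lies in no blocking pair, and the set of students forced to move is exactly those matched to $h_j$ in $M$, so the instances and their solutions coincide. For the second reduction, to simulate decreasing the capacity of each $h_j$ by $k_j$, keep the original capacities in round $\cR_1$ (so $M$ is untouched) and in round $\cR_2$ introduce $k_j$ dummy students per school whose only acceptable school is $h_j$ and whom $h_j$ ranks above every real student. In any stable matching of round $\cR_2$ each dummy must be matched to $h_j$ (otherwise $h_j$ is under-filled and blocks with it, or is full only with students it ranks below the dummy), so exactly $k_j$ seats are consumed and the real students face effective capacity $c(j)-k_j$; since a dummy blocks with no other school, the rest of the market is unaffected and minimum stable re-allocations of the two instances agree on the real students.

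For the third reduction, to simulate the arrival of new students $N$, add a dummy school $h^{*}$ of capacity $|N|$ that is present already in round $\cR_1$, is ranked first by every student of $N$, ranks the students of $N$ above $\emptyset$ and all of $S$ below $\emptyset$, while every original student ranks $h^{*}$ below $\emptyset$. Then in round $\cR_1$ every stable matching parks all of $N$ at $h^{*}$ and restricts to a stable matching of $(S,H)$, which we take to be $M$; inserting the $N$-students into the schools' lists does not disturb this, since the $N$-students are already at their first choice. Deleting $h^{*}$ in round $\cR_2$ recreates exactly the ``new students arrive'' instance, and since every student of $N$ necessarily leaves $h^{*}$, the two objectives differ only by the constant $|N|$, so their argmins coincide.

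The reductions themselves are routine gadget constructions; the verification effort — and the only place where care is genuinely needed — is in checking that $M$ (extended by the dummy assignments) remains stable after each gadget is attached in round $\cR_1$, and that the count of re-allocated original students is carried across exactly, not merely up to the additive constants identified above, so that ``minimum'' is preserved. I would handle this by the standard blocking-pair case analysis (a putative blocking pair in the modified instance involving the gadget is ruled out by the gadget's extreme preferences, and one not involving the gadget descends to a blocking pair of the original instance), which I expect to go through without surprises.
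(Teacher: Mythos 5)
Your proposal is correct and follows essentially the same route as the paper: the paper's proof simply states that the reductions are symmetric to those of the preceding proposition (adding a school $\to$ increasing capacities $\to$ removing students $\to$ adding a school), and your cycle---removing a school as capacity zero, dummy top-ranked students consuming the removed seats, and a round-$\cR_1$ dummy school parking the new students that is deleted in round $\cR_2$---is exactly that symmetric construction, spelled out with the stability and counting checks made explicit.
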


\begin{proof} 
The reductions are symmetric to those in the preceding section.
\end{proof}
	
	\section{NP-Hardness Results}
\label{sec.hardness}

We show that many natural problems lying in the setting of two temporally-separated rounds are NP-complete. The first 3 problems involve stable extensions in round $\cR_2$, where we are allowed to increase the capacities of schools but are not allowed to move students matched in round $\cR_1$. Problem 4 asks if there is a way of moving some students in $M$ to different schools in a way to accept more students and Problem 5 asks if there is a way of computing a single-round capacitated max weight stable matching. We define the problems formally below:

\begin{problem}
     A set of new students $N$ arrive in round $\cR_2$. Let $L$ be the set of students in round $\cR_1$ who are unmatched in $M$. The City wants to maximize the number of students in $L$ with which the matching can be extended in a stability-preserving manner. Subject to this, the City wants to minimize the number of students $N$ with which the matching can be extended in a stability-preserving manner. ($MAX_L MIN_N$)
    
\end{problem}    

\begin{problem}
    Same setting as Problem 1, but we want to first maximize the students in $N$, then, subject to this, minimize  the students in $L$. ($MAX_N MIN_L$)

\end{problem}

\begin{problem}
    A set of new students $N$ arrive in round $\cR_2$. The City wants to extend the matching in a stability-preserving manner to include $k$ students from $N$, such that we maximize the number of students matched from $L$.  It can be assumed without loss of generality that $k$ is large enough to allow for a stable extension. $(k$-$MAX_L)$
    
\end{problem}

\begin{problem}
     In round $\cR_2$, we are allowed to increase the capacities of schools and re-allocate students. The City wants to maximize the number of students in $L$ with which the matching can be updated in a stability-preserving manner. Subject to this, the City wants to minimize the number of re-allocations made to the original matching in round $\cR_1$.
\end{problem}
    
\begin{problem}
    Single-Round Capacitated Max-Weight Stable Matching: Given a set of students, and schools with strictly ordered preference lists $l(s),l(h)$ respectively, and a weight function $w(j)$ over the edges of students to schools, find a vector of capacities for the schools and a stable matching with respect to this vector that maximizes the total weight. 

\end{problem}

\begin{theorem}
Problems 1,2,3,4,5 are NP-complete.
\end{theorem}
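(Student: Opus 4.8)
The plan is to put each problem in threshold form: Problem~1 becomes ``given $a,b$, is there a capacity vector $c'\ge c$ and a stable extension $M'\supseteq M$ that moves no student of $S$, matches at least $a$ students of $L$, and matches at most $b$ students of $N$?'', with Problems~2--4 analogous, while Problem~5 asks ``is there $(c',M')$ with $M'$ stable for $c'$ and $\sum_{(s,h)\in M'}w(s,h)\ge W$?''. In each case the certificate is the pair (capacity vector, matching), which has polynomial size; verifying stability amounts to scanning all student--school pairs against the two blocking conditions of Section~\ref{sec:stable}, and the cardinality/weight thresholds are checked directly, all in polynomial time. For the lexicographic Problems~1,~2,~4 I would engineer the hardness instances so that the \emph{primary} optimum is a fixed, trivially attainable value (``all of $L$ matchable'', say, using the freedom to choose $M$); then the decision question collapses to a single-level threshold on the secondary count, which also avoids any ``among all primary-optimal solutions'' subtlety when arguing membership. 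Thus all five problems lie in NP.

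\textbf{Hardness: the plan.} I would give one core reduction and derive the rest by local modifications together with the equivalences already established (the Propositions: adding a school $\equiv$ raising capacities $\equiv$ removing students, and the symmetric statement for adding students). The ``pick $k$ of $N$, maximize matched $L$'' shape of Problem~3 matches \textsc{Maximum Coverage}; the ``match all of $L$, minimize matched $N$'' form to which Problems~1 (and, mirrored, 2 and 4) collapse matches \textsc{Set Cover} / \textsc{Vertex Cover}. Concretely, from a set system $(U,\{T_1,\dots,T_m\})$ I would build: one new student $t_i\in N$ per set $T_i$; one left-over student $\ell_u\in L$ per element $u\in U$; a ``set school'' $A_i$ for each $i$ together with per-element ``collector'' schools; round-$\cR_1$ capacities and a stable matching $M$ (with every $S$-student at its top choice) that leave every $\ell_u$ and every $t_i$ unmatched; and preference lists arranged so that, in any stable extension, an $\ell_u$ is matched only if some $t_i$ with $u\in T_i$ is matched, via exactly the cascading blocking-pair behaviour illustrated by Examples~\ref{sec:ex1} and~\ref{sec:ex2} (so a single added seat propagates a prescribed chain and nothing more). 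A stable extension matching all of $L$ thus corresponds to a cover and the number of $t_i$'s it uses to the number of chosen sets, so the $MIN_N$ value reads off the \textsc{Set-Cover} optimum and $k$-$MAX_L$ reads off \textsc{Maximum Coverage}.

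\textbf{Transfer to Problems 2, 4 and 5.} Problem~2 ($MAX_N MIN_L$) is the mirror construction with the roles of $L$ and $N$ interchanged, so that every $t_i$ is forced to be matched and each unit of coverage is ``paid for'' by an $\ell_u$; the $MIN_L$ value again equals the \textsc{Set-Cover} optimum (alternatively, reduce Problem~2 to Problem~1 using the $L\leftrightarrow N$ symmetry of the gadget). For Problem~4 the extra freedom to reallocate round-$\cR_1$ students is neutralised by giving each such student a private dummy school, strictly preferred to all real schools, with a seat reserved exactly for her; no stable matching ever moves her, so Problem~4 becomes the Problem~1 instance and ``minimize reallocations'' becomes ``minimize matched $N$''. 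For Problem~5 I would collapse the two rounds into one: the round-$\cR_1$ seats and the matching $M$ are simulated by dummy students permanently pinned to fixed high-priority seats, ``raise a school's capacity'' becomes ``choose that capacity'', and a weight function rewarding each matched $\ell_u$ heavily and penalising each matched $t_i$ slightly makes the max-weight stable matching coincide with the $MAX_L$-then-$MIN_N$ optimum, giving NP-hardness of Problem~5 as well.

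\textbf{The main obstacle.} The substance of the proof is entirely in the gadget design: one must choose the (strict, possibly $\emptyset$-truncated) preference lists and capacities so that \emph{(i)} the intended capacity vectors really do admit a stable matching -- in particular the newly opened seats and the list truncations introduce no stray blocking pair -- and \emph{(ii)} \emph{every} stable solution attaining the primary optimum is forced into the intended ``cover'' configuration, so that the secondary count cannot be improved by some unanticipated stable assignment. Carrying both the completeness and the soundness directions through for the many-to-one capacitated model, while keeping the construction polynomial, is where the difficulty lies; the NP-membership and the reductions among the five problems are comparatively routine once the gadget is in hand.
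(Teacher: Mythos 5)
Your overall plan---reduce from Set Cover with one $N$-student per set, one $L$-student per element, and a per-set school whose preference list puts its set-student first, so that an element can only be admitted where its set-student is admitted---is essentially the paper's reduction for Problems 1, 2, 3 and 5, and your explicit threshold-form treatment of NP membership is fine (the paper is silent on it). The genuine gap is your transfer to Problem 4. Problem 4 has no set $N$ of new students: its secondary objective is the \emph{number of round-$\cR_1$ students who are re-allocated}, and that count is precisely the quantity that must encode the size of the cover. Your construction pins every round-$\cR_1$ student to a private dummy top-choice school, so that no stable matching ever moves anyone; but then the number of re-allocations is identically $0$ in every feasible solution, the secondary objective is vacuous, and the claim that ``minimize reallocations becomes minimize matched $N$'' is not a statement about Problem 4 as defined. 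Worse, in such a pinned instance the primary objective is trivially achieved: give each leftover student a new seat at her top-choice school; since every student (pinned or newly seated) sits at her top choice, no blocking pair can arise, so all of $L$ is matched with zero re-allocations and the instance carries no hardness at all.

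The paper does the opposite of neutralising re-allocations: it makes them the scarce resource. For each set $S_i$ it builds a school $h_i$ with list $(w_i, e_{i1},\dots,e_{ik})$ and a depot school $h_0$ of capacity $n$ holding all the $w_i$ in round $\cR_1$, where each $w_i$ prefers $h_i$ to $h_0$ and all other capacities are $0$; to admit any element-student to $h_i$ in round $\cR_2$ one must move $w_i$ from $h_0$ to $h_i$, since otherwise $(w_i,h_i)$ blocks. Hence, after matching all of $L$, the number of re-allocations equals the number of sets used, and minimizing it solves Set Cover. Your reductions for the remaining problems are in the same spirit as the paper's (the per-element ``collector'' schools are unnecessary but harmless, and for Problem 5 note that the paper avoids ``penalising'' matched set-students by instead giving each $n_i$ an alternative school $h_0$ with $w(h_0,n_i)=1$, i.e.\ an opportunity cost, which sidesteps any question about negative weights); Problem 4, however, needs a construction in which moving round-$\cR_1$ students is both possible and forced, not one in which it is forbidden.
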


\begin{proof}
For all the problems we reduce from an instance of cardinal set-cover problem. We denote an instance $I$ of cardinal set-cover to have a sequence of sets $S_{i} \subseteq U$ and a universal set of elements $U=\{e_{1}, .... e_{m}\}$
\begin{enumerate}
	\item For every set $S_i = \{ e_{i1}, e_{i2}, .., e_{ik} \}$ we construct a corresponding school $h_i$. The preference list for each $h_i$ is $(n_i, e_{i1},...., e_{ik})$, where $n_i \in N$ is a student who only wants to go to $h_i$. We set the capacities of the schools to 0 in round $\cR_1$. In round $\cR_2$ the $n_i$ arrive. $MAX_L MIN_N$ will match all $e_j$, and the fewest possible $n_i$. By our construction, when a student $e_j$ is admitted to $h_i$, $n_i$ must be admitted to $h_i$. Therefore the admitted $n_i$ correspond to a optimal set cover.
	
	\item The problem for $MAX_N MIN_L$ is symmetric to $MAX_L MIN_N$. The only difference in the reduction is to have the gadget $n_i$ be a student from $L$, and let the $e_i$ be students in $N$.
	
	\item For every set $S_i = \{ e_{i1}, e_{i2}, .., e_{ij} \}$ we construct a corresponding school $h_i$. The preference list for each $h_i$ is $(n_i, e_{i1},...., e_{ij})$, where $n_i \in N$ is a student who only wants to go to $h_i$. We can then solve the decision version of set cover (i.e. is there a cover of size $k$?) by reducing to $k$-$MAX_L$ and returning yes if all $L$ are matched and no otherwise.
	
	\item For every set $S_i = \{ e_{i1}, e_{i2}, .., e_{ik} \}$ we construct a corresponding school $h_i$. The preference list for each $h_i$ is $(w_i, e_{i1},...., e_{ik})$. We create another school $h_0$ whose preference list is $(w_1, ... , w_n)$. Each $w_i$ prefers $h_i$ to $h_0$. We set the capacity of $h_0$ to $n$ in round $\cR_1$, and all other schools are set to capacity $0$. In Round R2, all $L$ will be admitted. To cover an element a school must re-allocate the corresponding $w_i$ to that school, so minimizing the number of re-allocations corresponds to finding a minimum set cover.
	
	\item For every set $S_i = \{ e_{i1}, e_{i2}, .., e_{ik} \}$ we construct a corresponding school $h_i$. The preference list for each $h_i$ is $(n_i, e_{i1},...., e_{ik})$, where $n_i \in N$ is a student whose preference list is $(h_i, h_0)$. 
    We add a school $h_0$ with preference list $(n_1, n_2, .... )$. We define weights on the edges in the following way:\\
    $w(h_0, n_i) = 1$,  $w(h_j, n_i) = 0$, $w(h_j, e_{ij}) = 2$.\\
    We note that any max-weight stable extension will match all $e_{ij}$. Therefore, we want to minimize the $n_i$'s not matched to $h_0$. So the $h_j$ who get matched to students correspond to the sets in an optimal set-cover.
	
\end{enumerate}
\end{proof}

	\section{Acknowledgements}
\label{sec.ack}

We wish to thank Xi Chen for proposing Problem 3.
	
	\bibliographystyle{alpha}
	%\bibliography{refs}

\end{document}